\newcommand{\R}{{\mathbb{R}}}
\newcommand{\N}{{\mathbb{N}}}
\newcommand{\X}{{\mathbf{X}}}
\newcommand{\con}{{\mathcal{o}}}
\newcommand{\F}{{\lozenge}}
\newcommand{\G}{{\Box}}
\newcommand{\W}{{\mathcal{W}}}
\newcommand{\true}{{\mathsf{true}}}
\newcommand\mydots{\hbox to 1em{.\hss.\hss.}}
\newtheorem{theorem}{Theorem}[section]
\newtheorem{assumption}{Assumption}
\newtheorem{definition}[theorem]{Definition}
\newtheorem{remark}[theorem]{Remark}
\newtheorem{proof}[theorem]{Proof}
\title{Approximation-free Control for Signal Temporal Logic Specifications using Spatiotemporal Tubes
\thanks{ This work was supported in part by the SERB Start-Up Research Grant; in part by the ARTPARK. The work of Ratnangshu Das was supported by the Prime Minister’s Research Fellowship from the Ministry of Education, Government of India.}
}
\author{
 Ratnangshu Das \\
  Robert Bosch Centre for Cyber-Physical Systems\\
  IISc, Bengaluru, India\\
  \texttt{ratnangshud@iisc.ac.in} \\
   \And
 Subhodeep Choudhury \\
  Department of Electrical Engineering,\\
  BITS Pilani, Goa,\\
  \texttt{f20212913@goa.bits-pilani.ac.in} \\
  \And
 Pushpak Jagtap \\
  Robert Bosch Centre for Cyber-Physical Systems\\
  IISc, Bengaluru, India\\
  \texttt{pushpak@iisc.ac.in} \\
}
\begin{document}
\maketitle

\begin{abstract}
This paper presents a spatiotemporal tube (STT)-based control framework for satisfying Signal Temporal Logic (STL) specifications in unknown control-affine systems. We formulate STL constraints as a robust optimization problem (ROP) and recast it as a scenario optimization program (SOP) to construct STTs with formal correctness guarantees. We also propose a closed-form control law that operates independently of the system dynamics, and ensures the system trajectory evolves within the STTs, thereby satisfying the STL specifications. The proposed approach is validated through case studies and comparisons with state-of-the-art methods, demonstrating superior computational efficiency, trajectory quality, and applicability to complex STL tasks.
\end{abstract}

\section{Introduction}
Control objectives in robotics have traditionally focused on stabilization and trajectory tracking. However, modern applications demand more complex task specifications, where systems must operate under time-dependent constraints and strict deadlines. Signal Temporal Logic (STL) \cite{STL_OdedMaler} provides a formal framework to specify such tasks, making it an increasingly popular tool in areas such as neural networks \cite{STL_NN, STL_RNN}, deep learning \cite{STLnet}, reinforcement learning \cite{STL_RL}, learning from demonstration \cite{STL_lfd}, and planning and control \cite{STL_planning, STL_multi_chuchu}. Although STL offers significant advantages, it presents algorithmic and computational challenges, making control under STL specifications particularly difficult.

STL-constrained control is often posed as an optimization problem. Mixed-integer programming (MIP) \cite{STL_multi_chuchu, STL_reactive_MILP} and gradient-based approaches \cite{STL_grad} handle simple dynamics and structured predicates but scale poorly. 
Model Predictive Control (MPC) \cite{STL_MPC, STL_MPC2, STL_MPC3} incorporates STL into a receding horizon framework, but becomes intractable with increased complexity. Furthermore, these optimization-based methods require accurate models, limiting their application in unknown real-world settings.
Control barrier functions (CBFs) \cite{CBF} enforce STL constraints by ensuring forward invariance of safe sets \cite{CBF_STL_Belta, CBF_STL_Dimos}, offering better scalability, but still depend on known dynamics and involve optimization. Prescribed performance control (PPC) \cite{PPC1} provides a more computationally efficient alternative for STL tasks \cite{STL_PPC, STL_funnel}, particularly for unknown systems, but lacks optimality. Additionally, both CBF-based and PPC-based methods support only restricted STL fragments.

This work proposes a spatiotemporal tube (STT)-based control framework \cite{STT} that addresses these challenges by offering a computationally efficient and formally correct approach to synthesizing controllers that satisfy STL specifications. STTs have been effective in handling temporal reach-avoid-stay (T-RAS) specifications \cite{DDSTT_arxiv}, multi-agent planning in unknown systems \cite{STT_multi} {and can also serve as a foundation for synthesizing CBFs with smooth box-shaped level sets \cite{STT_CBF}}. 
Here, we extend the STT framework to general STL specifications through three key steps: (i) first, frame the STL constraints as robust optimization problems (ROP); (ii) next, reformulate the ROP into a scenario optimization program (SOP) by sampling points in time and state space, enabling efficient construction of STTs with formal correctness guarantees; (iii) finally, derive an approximation-free, closed-form control law that ensures the system state remains within the STTs, guaranteeing STL satisfaction. This framework eliminates the need for system discretization or real-time optimization, significantly improving computational efficiency. Unlike existing methods, it can handle general STL formulas for unknown systems, making it broadly applicable. We validate our approach through case studies and compare it with MILP, MPC, CBF, and funnel-based methods. Empirical results show that our method achieves better control synthesis time, while handling more complex STL specifications for unknown dynamics.

\section{Preliminaries and Problem Formulation} \label{sec:prob}
\textit{Notations}: For $a,b\in\N$ with $a\leq b$, we denote the closed interval in $\N$ as $[a;b] := \{a, a+1, \ldots, b\}$. We use $x\circ y$ to represent the element-wise multiplication where $x,y\in \R^n$. All other notation in this paper follows standard mathematical conventions.

\subsection{System Definition}
Consider the following control-affine nonlinear system:
\begin{align}
    \mathcal{S}: \dot{x} = f(x) + g(x)u + w, \label{eqn:sysdyn}
\end{align}
where {$x(t) = [x_1(t), \ldots, x_n(t)]^{\top} \in \X \subset \mathbb{R}^n$ and $u(t) \in \mathbb{R}^n$ are the state and control input vectors of equal dimension, thereby assuming a fully actuated system.} $w(t) \in \mathbb{W} \subset \R^n$ is the unknown bounded disturbance. The state space of the system is defined by the compact set $\X$.
\begin{assumption}\label{assum:lip}
    The functions $f: \X \rightarrow \R^n$ and $g: \X \rightarrow \R^{n \times n}$ in \eqref{eqn:sysdyn} are unknown and locally Lipschitz continuous.         
\end{assumption}
\begin{assumption}\label{assum:pd}{(\cite{PPC1, PPC0})}
    The symmetric component of $g(x): g_s(x) = \frac{g(x)+g(x)^{\top}}{2}$ is uniformly sign-definite with known sign for all $x \in \X$. Without loss of generality, assume $g_s(x)$ is uniformly positive definite, i.e., there exits $\underline{g}\in\mathbb R^+$:
    $0 < \underline{g} \leq \lambda_{\min} (g_s(x)), \forall x \in \X,$ where $\lambda_{\min}(\cdot)$ is the smallest eigenvalue of the matrix.
\end{assumption}
\begin{remark}
    We can extend from control-affine to nonlinear dynamics by treating the nonlinear terms as disturbances. 
\end{remark}

\subsection{Signal Temporal Logic (STL)}
Signal Temporal Logic (STL) \cite{STL_OdedMaler} is a formal language used to specify the spatial, temporal, and logical properties of continuous-time signals. The set of STL formulae can be recursively expressed using predicates $\mathsf{p}$. Consider the predicate function $h:\R^n \rightarrow \R$, then 
$\mathsf{p} := {\true} \text{, if $h(s) \geq 0$, and} \ \mathsf{false} \text{ if $h(s) < 0$}.$

An STL formula $\phi$ is recursively defined using predicates, Boolean logic, and temporal operators:
\begin{align}\label{eqn:stl}
    \phi := {\true} \ | \ \mathsf{p} \ | \ \neg \phi \ | \ \phi_1 \land \phi_2 \ | \ \phi_1 \lor \phi_2 \ | \ 
    \G_{[a, b]}\phi \ | \ \F_{[a, b]}\phi \ | \ \phi_1\mathcal{U}_{[a, b]}\phi_2,
\end{align}
where $\mathsf{p}$ is a predicate, and $\phi_1, \phi_2$ are STL formulae. Boolean operators for negation, disjunction, and conjunction are denoted by $\neg, \lor$, and $\land$, respectively. Symbols $\F$, $\G$ and $\mathcal{U}$ represent the temporal operators - eventually, always, and until, respectively. $[a, b]$ is the time interval with $a, b \in \R_{\geq 0}$ such that $a \leq b$.
The satisfaction relation $(x,t)\models\phi$ denotes if a signal $x: \R_{\geq 0} \rightarrow \R^n$, possibly a solution of \eqref{eqn:sysdyn}, satisfies an STL formula $\phi$ at time $t$. The STL semantics \cite{STL_OdedMaler} for a signal $x$ is recursively given by:
\begin{align*}
    &(x,t)\models\mathsf{p} &&\Leftrightarrow \  h(x(t))\ge0, \\
    &(x,t)\models\neg\phi &&\Leftrightarrow \  \neg((x,t)\models\phi), \\
    & (x,t)\models\phi_1\land\phi_2 &&\Leftrightarrow \  (x,t)\models\phi_1\land (x,t) \models\phi_2, \\
    &(x,t)\models{\F}_{[a, b]}\phi&&\Leftrightarrow \  \exists t' \in [t+a,t+b], (x,t')\models\phi, \\
    &(x,t)\models{\G}_{[a, b]}\phi&&\Leftrightarrow \  \forall t' \in [t+a,t+b], (x,t')\models\phi \\
    &(x,t)\models \phi_1{\mathcal{U}}_{[a, b]}\phi_2 &&\Leftrightarrow \  \exists t_1 \in [t+a,t+b], (x,t_1)\models\phi \ \land \forall t_2 \in [t+a,t_1], (x,t_2) \models \phi_2
\end{align*}

A signal $x$ satisfies an STL formula $\phi$, denoted $x \models \phi$, if and only if $(x,0) \models \phi$. The STL robustness metric $\rho^{\phi}(x,t)$ \cite{STL_robust} quantifies the degree of satisfaction: $\rho^{\phi}(x,t) \geq 0$ implies $(x,t) \models \phi$, and its magnitude reflects the strength of satisfaction or violation. The robustness semantics are defined recursively as:
\begin{subequations}\label{eqn:stl_rho}
\begin{align}
    &\rho^{\mathsf{p}}(x,t)= h (x(t)), \label{eqn:stta}\\
    &\rho^{\neg\phi} (x,t)= -\rho^{\phi} (x(t)), \\
    &\rho^{\phi_1\land\phi2} (x,t)= \min (\rho^{\phi_1} (x,t), \rho^{\phi_2} (x,t)),\\
    &\rho^{{\F}_{[a, b]}\phi} (x,t)= \max_{t_1 \in [t+a,t+b]} \rho^{\phi} (x,t_1),\\
    &\rho^{{\G}_{[a, b]}\phi} (x,t)= \min_{t_1 \in [t+a,t+b]} \rho^{\phi} (x,t_1),\\
    &\rho^{\phi_1{\mathcal{U}}_{[a, b]}\phi_2} (x,t) = \max _{t_1 \in [t+a, t+b]} \min ( \rho^{\phi_1}(x,t_1), \min_{t_2 \in [t+a,t_1]} \rho^{\phi_2} (x,t_2)).
    \label{eqn:stte}
\end{align}
\end{subequations}
For simplicity, we denote the robustness at $t = 0$ by $\rho^{\psi}(x)$ to represent. {We consider a finite time horizon $[0,t_f]$ over which the specification $\phi$ is to be realized, i.e., the signal $x:[0,t_f] \rightarrow \R^n$ is such that $x \models \phi$. 
}

\subsection{Spatiotemporal Tubes}
To satisfy the given STL specification, we leverage Spatiotemporal tubes (STTs), which are defined next.
\begin{definition}[STTs for STL Task]\label{def:stt}
For an STL task $\phi$ defined in Equation \eqref{eqn:stl} over the time interval $[0, t_f]$, the time-varying intervals $[\gamma_{i,L}(t), \gamma_{i,U}(t)]$ are called STTs for STL, if $\gamma_{i,L}:\R_0^+\rightarrow\R$ and $\gamma_{i,U}:\R_0^+\rightarrow\R$ are continuously differentiable functions satisfying $\gamma_{i,L}(t) < \gamma_{i,U}(t)$ for all $i \in [1;n]$ and for all $t\in[0, t_f]$, and the following holds:
\begin{align}\label{eqn:stt_stl}
    \rho^\phi (x) > 0, \forall x: [0,t_f] \rightarrow \R^n, \ \text{such that, } x(\tau) \in \prod_{i \in [1;n]} [\gamma_{i,L}(\tau), \gamma_{i,U}(\tau)], \forall \tau \in [0, t_f].
\end{align}
\end{definition}

\begin{remark}
    If one designs a control law that constrains the state trajectory within the STTs, i.e.,
    \begin{align} \label{eqn:stt_constrain}
    &\gamma_{i,L}(\tau) < x_i(\tau) < \gamma_{i,U}(\tau), \forall (i, \tau) \in [1;n] \times [0, t_f],
\end{align}
then one can ensure the satisfaction of STL specifications.
\end{remark}

\subsection{Problem Statement}
Given an STL specification $\phi$ over $t \in [0, t_f]$, as in Equation \eqref{eqn:stl}, first synthesize STTs. Then, for an unknown control-affine system \eqref{eqn:sysdyn}, design a closed-form, approximation-free control law $u:X\times\R_0^+\rightarrow\R^n$, that constrains the system state $x(t)$ within the STTs for all $t \in [0, t_f]$, thereby satisfying Equation \eqref{eqn:stt_constrain} and ensuring $x \models \phi$. 

\section{Construction of Spatiotemporal Tubes}\label{sec:sampling_based}
In this section, the main goal is to construct STTs that satisfy the given STL specifications. 
We first fix the structure of the curves that form the STTs for the $i$-th dimension as, 
$$\gamma_{i,\con}(c_{i,\con},t) = \sum_{k=1}^{z_{i,\con}} c_{i,\con}^k p_{i,\con}^k(t), \ \con \in \{L,U\}, \ i\in [1;n],$$
where $L$ and $U$ denote the lower and upper constraints, respectively. $p_{i,\con}(t)$ are user-defined nonlinear continuously differentiable basis functions and $c_{i,\con} = [c_{i,\con}^1; c_{i,\con}^2;...; c_{i,\con}^{z_{i,\con}}] \in \mathbb{R}^{z_{i,\con}}$ denote unknown coefficients.

To satisfy the conditions in Definition \ref{def:stt}, we formulate the following Robust Optimization Program (ROP) for all dimensions $i \in [1;n]$ and for all time $\tau \in [0, t_f]$:
\begin{subequations} \label{eq:ROP}
\begin{align}
& \min_{[d_1, d_2,...,d_n,\eta]} \quad \eta, \quad  \textrm{s.t.}  \notag \\
& \forall (i, \tau) \in [1;n] \times [0, t_f]: \notag \\
& \gamma_{i,L}(c_{i,L},\tau) - \gamma_{i,U}(c_{i,U},\tau) + \gamma_{i,d} \leq \eta, \label{subeqn:ROP1} \\
& \dot{\gamma}_{i,L}(c_{i,L},\tau) - \mathcal{L}_s \leq \eta, \ 
\dot{\gamma}_{i,U}(c_{i,U},\tau) - \mathcal{L}_s \leq \eta; \label{subeqn:ROP2}\\
& \forall x :[0,t_f] \rightarrow \R^n, \text{s.t., }\notag\\
&x(\tau) = [\lambda_1 \gamma_{1,U}(c_{1,U},\tau) + (1 - \lambda_1)\gamma_{1,L}(c_{1,L},\tau), \ldots,  \lambda_n \gamma_{n,U}(c_{n,U},\tau) + (1 - \lambda_n)\gamma_{n,L}(c_{n,L},\tau)]^\top, \notag \\
&\hspace{9cm} \forall (\lambda_{1}, \ldots, \lambda_{n}, \tau) \in [0,1]^n \times [0, t_f], \notag \\
& -\rho^{\phi} (x) \leq \eta \label{subeqn:ROP3}.\\
& d_i = [c_{i,L}, c_{i,U},\eta] \notag.
\end{align}
\end{subequations}
{Here, $\gamma_{i,d} \in \R^+$ and $\mathcal{L}_s \in \R^+$ are user-defined parameters that set the minimum separation and the maximum slope of STTs, respectively.} To facilitate the formulation of the ROP, we express Equation \eqref{eqn:stt_stl} in a parametric form using $\lambda_i \in [0,1]$ for all $i \in [1;n]$ in \eqref{subeqn:ROP3}.
It is evident that if the optimal solution to the ROP satisfies $\eta^* \leq 0$, then the conditions outlined in Definition \ref{def:stt} are guaranteed to hold.
 
{\begin{remark}\label{rem:shortSTT}
    The constraint \eqref{subeqn:ROP2} on the tube derivatives ensures smooth tube evolution, leading to more efficient trajectories with lower control effort for STL tasks. However, it adds complexity and might prevent obtaining solutions even if valid STTs with higher slopes exist. It can be omitted for more flexibility or faster computation.
\end{remark}
}

Solving the proposed ROP in \eqref{eq:ROP} involves infinitely many constraints in continuous space. To address this, we propose a sampling-based approach for constructing the curves that define the tubes.
Consider the augmented set $\W = [0,1]^n \times [0, t_f]$ and collect $N$ samples $w_r = (\lambda_{1,r}, \ldots, \lambda_{n,r}, \tau_r)$ from $\W$, where $r = [1;N]$. For each $w_r$, define a ball $\W_r $ with radius $\epsilon$, such that for all the points in the augmented set $(\lambda_1, \ldots, \lambda_n, \tau) \in \W$, there exists a $w_r$ that satisfies 
\begin{align}\label{eq:ball}
    \| (\lambda_1, \ldots, \lambda_n, \tau) - w_r \| \leq \epsilon , \forall (\lambda_1, \ldots, \lambda_n, \tau) \in \W_r. 
\end{align}  
This ensures that the union of all these balls forms a superset of the augmented set, i.e., $\bigcup_{r=1}^{N} \W_r \supset [0,1]^n \times [0, t_f]$.

Now, we construct the scenario optimization program (SOP) associated with the ROP:
\begin{subequations} \label{eq:SOP}
\begin{align}
& \min_{[d_1, d_2,...,d_n,\eta]} \quad \eta, \quad  \textrm{s.t.}  \notag \\
& \forall r \in [1;N], (i, \tau_r) \in [1;n] \times [0, t_f]: \notag \\
& \gamma_{i,L}(c_{i,L},\tau_r) - \gamma_{i,U}(c_{i,U},\tau_r) + \gamma_{i,d} \leq \eta, \\
& \dot{\gamma}_{i,L}(c_{i,L},\tau_r) - \mathcal{L}_s \leq \eta, \ 
\dot{\gamma}_{i,U}(c_{i,U},\tau_r) - \mathcal{L}_s \leq \eta;\\
& \forall x :[0,t_f] \rightarrow \R^n, \text{s.t., }\notag\\
&x(\tau_r) = [\lambda_{1,r} \gamma_{1,U}(c_{1,U},\tau_r) + (1 - \lambda_{1,r})\gamma_{1,L}(c_{1,L},\tau_r), \ldots,  \lambda_{n,r} \gamma_{n,U}(c_{n,U},\tau_r) + (1 - \lambda_{n,r})\gamma_{n,L}(c_{n,L},\tau_r)]^\top, \notag \\
&\hspace{8cm} \forall r \in [1;N], \forall (\lambda_{1,r}, \ldots, \lambda_{n,r}, \tau_r) \in [0,1]^n \times [0, t_f], \notag \\
& -\rho^{\phi} (x) \leq \eta \label{subeqn:ROP3}.\\
& d_i = [c_{i,L}, c_{i,U},\eta] \notag.
\end{align}
\end{subequations}

It is clear that the SOP in Equation \eqref{eq:SOP} consists of a finite number of constraints, structurally identical to those in the ROP \eqref{eq:ROP}. However, to ensure that the STTs, constructed by solving the SOP with a finite set of sampled data, satisfy the constraints of the ROP over continuous time and space, we introduce the following assumption:

\begin{assumption} \label{assum:funlip}
    $\gamma_{i,L}(c_{i,L},t)$, $\gamma_{i,U}(c_{i,U},t)$, $\dot{\gamma}_{i,L}(c_{i,L},t)$, $\dot{\gamma}_{i,U}(c_{i,U},t)$
    are all Lipschitz continuous in $t$, with constants $\mathcal{L}_{L}$, $\mathcal{L}_{U}$, $\mathcal{L}_{dL}$ and $\mathcal{L}_{dU}$, respectively, for all $i \in [1;n]$.
\end{assumption}

From \cite{STL_Lip}, we know that there exists $\mathcal{L}_{\rho_\sigma}\in \R^+$, such that for all $x = [x_1, \ldots, x_n]^\top \in X$,
$ \| \rho^{\phi}([x_1, \ldots, x_{k}, \ldots, x_n]^\top) - \rho^{\phi}([x_1, \ldots, \hat{x}_{k}, \ldots, x_n]^\top) \|
     \leq \mathcal{L}_{\rho_\sigma} \| {x}_{k} - \hat{x}_{k} \|$.

\begin{theorem}\label{thm:lip_rob}
    $\mu(\lambda_1, \ldots, \lambda_n)
    := -\rho (x)$ with $x(\tau) = [\lambda_1 \gamma_{1,U}(c_{1,U},\tau) + (1 - \lambda_1)\gamma_{1,L}(c_{1,L},\tau), \ldots,
    \lambda_n \gamma_{n,U}(c_{n,U},\tau) + (1 - \lambda_n)\gamma_{n,L}(c_{n,L},\tau)]^\top$ for all $\tau \in [0,t_f]$, is Lipschitz continuous with the Lipschitz constant $\mathcal{L}_\mu \in \R^+$. 
    That is, there exists $\mathcal{L}_\mu \in \R^+$, such that, for all $ [\lambda_{1}, \ldots, \lambda_{n}]^\top,[\hat{\lambda}_{1}, \ldots, \hat{\lambda}_{n}]^\top \in [0,1]^n$,
    $$
        \|\mu(\lambda_{1}, \ldots, \lambda_{n}) - \mu(\hat{\lambda}_{1}, \ldots, \hat{\lambda}_{n})\| \\
        \leq \mathcal{L}_\mu \| [\lambda_{1}, \ldots, \lambda_{n}]^\top - [\hat{\lambda}_{1}, \ldots, \hat{\lambda}_{n}]^\top \|.    
    $$
\end{theorem}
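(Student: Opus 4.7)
The plan is to trace the $\lambda$-dependence of the signal $x$ coordinate-by-coordinate and then invoke the per-coordinate Lipschitz property of $\rho^\phi$ cited from \cite{STL_Lip}. The key observation is that $x_i(\tau)$ is affine in $\lambda_i$ and does not depend on $\lambda_j$ for $j\neq i$, so a telescoping argument over coordinates will cleanly reduce matters to $n$ single-coordinate comparisons.

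Concretely, for any $\lambda = [\lambda_1,\ldots,\lambda_n]^\top$ and $\hat{\lambda} = [\hat{\lambda}_1,\ldots,\hat{\lambda}_n]^\top$ in $[0,1]^n$, let $x$ and $\hat{x}$ denote the associated signals defined in the statement. For every $\tau \in [0,t_f]$,
\begin{equation*}
x_i(\tau) - \hat{x}_i(\tau) = (\lambda_i - \hat{\lambda}_i)\bigl(\gamma_{i,U}(c_{i,U},\tau) - \gamma_{i,L}(c_{i,L},\tau)\bigr).
\end{equation*}
By Assumption \ref{assum:funlip} the tube curves are continuous on the compact interval $[0,t_f]$, so the quantity
\begin{equation*}
M_i \coloneqq \sup_{\tau \in [0,t_f]} \bigl|\gamma_{i,U}(c_{i,U},\tau) - \gamma_{i,L}(c_{i,L},\tau)\bigr|
\end{equation*}
is finite. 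Hence $\|x_i - \hat{x}_i\| \leq M_i\,|\lambda_i - \hat{\lambda}_i|$ in the signal norm used in the cited Lipschitz bound.

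Next, I would build a telescoping chain of intermediate signals $x^{(0)}, x^{(1)}, \ldots, x^{(n)}$, where $x^{(0)} = x$, $x^{(n)} = \hat{x}$, and $x^{(k)}$ differs from $x^{(k-1)}$ only in its $k$-th component (switched from the $\lambda_k$-version to the $\hat\lambda_k$-version). Applying the per-coordinate Lipschitz property of $\rho^\phi$ from \cite{STL_Lip} along each step gives
\begin{equation*}
\bigl|\rho^\phi(x) - \rho^\phi(\hat{x})\bigr| \leq \sum_{k=1}^{n} \bigl|\rho^\phi(x^{(k-1)}) - \rho^\phi(x^{(k)})\bigr| \leq \mathcal{L}_{\rho_\sigma}\sum_{k=1}^{n} M_k\,|\lambda_k - \hat{\lambda}_k|.
\end{equation*}
Since $\mu = -\rho^\phi$, the same bound holds for $|\mu(\lambda) - \mu(\hat\lambda)|$. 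Finally, a Cauchy--Schwarz step bounds $\sum_k M_k|\lambda_k - \hat\lambda_k| \leq \bigl(\sum_k M_k^2\bigr)^{1/2}\|\lambda - \hat\lambda\|$, yielding the claim with
\begin{equation*}
\mathcal{L}_\mu = \mathcal{L}_{\rho_\sigma}\Bigl(\sum_{k=1}^n M_k^2\Bigr)^{1/2}.
\end{equation*}

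I do not anticipate a serious technical obstacle: the argument is essentially bookkeeping. The only delicate point is making sure the signal-norm in which the cited per-coordinate Lipschitz bound of $\rho^\phi$ is stated (typically a sup-norm over $[0,t_f]$) is the same one in which $M_i$ dominates $\|x_i - \hat x_i\|$; compactness of $[0,t_f]$ together with Assumption \ref{assum:funlip} handles this uniformly. If a different signal norm is intended, the step replacing $\|x_i - \hat x_i\|$ by $M_i|\lambda_i - \hat\lambda_i|$ may require an obvious adjustment (e.g., multiplying by $\sqrt{t_f}$ for an $L^2$-type norm), but the overall structure of the proof is unaffected.
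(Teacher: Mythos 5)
Your proof follows essentially the same route as the paper's: a coordinate-by-coordinate telescoping decomposition, the per-coordinate Lipschitz property of $\rho^{\phi}$ from the cited reference, boundedness of the tube curves on the compact interval $[0,t_f]$, and a final Cauchy--Schwarz step. The only difference is cosmetic: you bound $|x_i-\hat x_i|$ by the tube width $M_i=\sup_\tau|\gamma_{i,U}-\gamma_{i,L}|$ rather than the paper's $\overline{\gamma}_L+\overline{\gamma}_U$, which merely yields a slightly sharper constant $\mathcal{L}_\mu$.
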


\begin{proof}
    The proof is provided in the Appendix.
\end{proof}

In Theorem \ref{th:constr}, we show that despite solving the SOP \eqref{eq:SOP} with finitely many sampled points ${(\tau_r, \lambda_r)}_{r=1}^N$, the Lipschitz continuity of the relevant functions allows us to extend the results to the continuous domain $[0, t_f]$ of the ROP \eqref{eq:ROP}.

\begin{theorem} \label{th:constr}
    Suppose that the SOP in (\ref{eq:SOP}) is solved using $N$ sampled data points as defined in Equation \eqref{eq:ball}. Let the optimal value of SOP be $\eta^*$ with the corresponding solution $d_i^* = [c_{i,L}^*, c_{i,U}^*, \eta^*]$, for all $i \in [1;n]$. If 
    \begin{equation} \label{eq:satisfy}
        \eta^* + \mathcal{L}\epsilon \leq 0,
    \end{equation}
    where $\mathcal{L} = \max\{\mathcal{L}_{L} + \mathcal{L}_{U}, \mathcal{L}_{dL}, \mathcal{L}_{dU}, \sqrt{\mathcal{L}_\mu^2+\mathcal{L}_\rho^2 (\mathcal{L}_L+\mathcal{L}_U)^2}$\}, then the STTs $\gamma_{i,L}$ and $\gamma_{i,U}$, for all $i \in [1;n]$, obtained from the SOP in \eqref{eq:SOP} satisfy the conditions of Definition \ref{def:stt}. 
\end{theorem}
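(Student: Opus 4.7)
The plan is to leverage the ball-covering property \eqref{eq:ball} together with Lipschitz continuity to transfer the sampled feasibility of the SOP to the continuous feasibility of the ROP, with a deviation that is absorbed by the margin $\mathcal{L}\epsilon$. Fix an arbitrary $(\lambda_1,\ldots,\lambda_n,\tau) \in \W = [0,1]^n \times [0,t_f]$, and pick the sample $w_r = (\lambda_{1,r},\ldots,\lambda_{n,r},\tau_r)$ guaranteed by \eqref{eq:ball} with $\|(\lambda_1,\ldots,\lambda_n,\tau) - w_r\| \leq \epsilon$. The goal is to show, for each family of ROP constraints, that its value at $(\lambda_1,\ldots,\lambda_n,\tau)$ is at most its value at $w_r$ plus $\mathcal{L}\epsilon$; since the latter is $\leq \eta^*$ by optimality of the SOP and $\eta^* + \mathcal{L}\epsilon \leq 0$ by \eqref{eq:satisfy}, the corresponding ROP constraint will be nonpositive.

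I would process the three families of constraints separately. For \eqref{subeqn:ROP1}, the map $\tau \mapsto \gamma_{i,L}(c_{i,L}^*,\tau) - \gamma_{i,U}(c_{i,U}^*,\tau) + \gamma_{i,d}$ depends only on $\tau$, and by Assumption \ref{assum:funlip} it is Lipschitz with constant $\mathcal{L}_L + \mathcal{L}_U \leq \mathcal{L}$, so its value at $\tau$ differs from its value at $\tau_r$ by at most $\mathcal{L}|\tau - \tau_r| \leq \mathcal{L}\epsilon$. The same argument, using $\mathcal{L}_{dL}$ and $\mathcal{L}_{dU}$, handles the two tube-derivative constraints of \eqref{subeqn:ROP2}.

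The most delicate step is the robustness constraint. Here $-\rho^{\phi}(x)$ depends on both $\tau$ (through the curves $\gamma_{i,\mathsf{o}}(c_{i,\mathsf{o}}^*,\tau)$) and on $(\lambda_1,\ldots,\lambda_n)$. Theorem \ref{thm:lip_rob} already supplies Lipschitz dependence in the $\lambda$-coordinates with constant $\mathcal{L}_\mu$. For the $\tau$-dependence I would compose two Lipschitz estimates: perturbing $\tau$ shifts each state component $x_i = \lambda_i\gamma_{i,U} + (1-\lambda_i)\gamma_{i,L}$ by at most $(\mathcal{L}_L + \mathcal{L}_U)|\tau - \tau_r|$ (using $\lambda_i \in [0,1]$ and Assumption \ref{assum:funlip}), and the Lipschitz property of $\rho^\phi$ in the state (from \cite{STL_Lip}) then yields a $\tau$-Lipschitz constant of $\mathcal{L}_\rho(\mathcal{L}_L + \mathcal{L}_U)$. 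Combining the $\lambda$-variation and the $\tau$-variation under the Euclidean norm on $\W$ through a Cauchy--Schwarz-type step produces a joint Lipschitz constant of $\sqrt{\mathcal{L}_\mu^2 + \mathcal{L}_\rho^2(\mathcal{L}_L+\mathcal{L}_U)^2} \leq \mathcal{L}$, so the deviation along the line segment from $w_r$ to $(\lambda_1,\ldots,\lambda_n,\tau)$ is again at most $\mathcal{L}\epsilon$.

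Putting these three bounds together, every constraint of the ROP \eqref{eq:ROP} evaluated at an arbitrary $(\lambda_1,\ldots,\lambda_n,\tau)$ is at most $\eta^* + \mathcal{L}\epsilon \leq 0$, which is exactly the statement that the curves $\gamma_{i,L}(c_{i,L}^*,\cdot)$ and $\gamma_{i,U}(c_{i,U}^*,\cdot)$ form STTs in the sense of Definition \ref{def:stt}. The main obstacle is the composite robustness bound: I must carefully verify that the sampling norm on $\W$, the componentwise Lipschitz constants on $\gamma_{i,\mathsf{o}}$, and the state Lipschitz constant of $\rho^\phi$ combine in compatible units so that the final joint constant matches the expression $\sqrt{\mathcal{L}_\mu^2 + \mathcal{L}_\rho^2(\mathcal{L}_L+\mathcal{L}_U)^2}$ claimed in the theorem; the other two constraint families are essentially one-variable Lipschitz bounds and should be routine.
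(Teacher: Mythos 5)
Your proposal follows essentially the same route as the paper's proof: cover $\W$ with the $\epsilon$-balls from \eqref{eq:ball}, treat the three constraint families separately with the Lipschitz constants $\mathcal{L}_L+\mathcal{L}_U$, $\mathcal{L}_{dL}$, $\mathcal{L}_{dU}$, and for the robustness constraint split the deviation into a $\lambda$-part (via Theorem \ref{thm:lip_rob}) and a $\tau$-part (via $\mathcal{L}_\rho(\mathcal{L}_L+\mathcal{L}_U)$), combining them into $\sqrt{\mathcal{L}_\mu^2+\mathcal{L}_\rho^2(\mathcal{L}_L+\mathcal{L}_U)^2}\,\epsilon$ exactly as the paper does (the paper merely makes the $\tau$-part explicit through an intermediate reconstructed trajectory $\tilde{x}_r$). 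The argument and constants match, so the plan is correct.
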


\begin{proof}\label{proof:constr}
{
This proof shows that under condition \eqref{eq:satisfy}, the STTs, $\gamma_{i,L}(t)$ and $\gamma_{i,U}(t)$, obtained using SOP in \eqref{eq:SOP} ensures conditions \eqref{subeqn:ROP1}–\eqref{subeqn:ROP3}.
}
Let $\eta^*$ denote the optimal value of the SOP.
    Now from, \eqref{eq:ball}, we infer that $\forall \tau \in [0, t_f], \exists \hspace{0.2em} \tau_r $ such that $|\tau-\tau_r| \leq \epsilon$.
    Thus, $\forall i \in [1;n], \forall r \in [1;N]$, $\forall \tau \in [0, t_f]$:
    \begin{itemize}
        \item[(a)]
            $\gamma_{i,L}(c_{i,L}, \tau) - \gamma_{i,U}(c_{i,U}, \tau) + \gamma_{i,d} =\left(\gamma_{i,L}(c_{i,L}, \tau) - \gamma_{i,L}(c_{i,L},\tau_r)\right) + \big(\gamma_{i,L}(c_{i,L},\tau_r) - 
            \gamma_{i,U}(c_{i,U},\tau_r) + \gamma_{i,d}\big) + \left(\gamma_{i,U}(c_{i,U},\tau_r) - \gamma_{i,U}(c_{i,U}, \tau)\right)
            \leq \mathcal{L}_{L} |t-\tau_r| +\eta^* + \mathcal{L}_{U} |t-\tau_r| \leq \mathcal{L}\epsilon + \eta^* \leq 0,$
        \item[(b)]
            $\dot{\gamma}_{i,L}(c_{i,L}, \tau) - \mathcal{L}_s 
            = \dot{\gamma}_{i,L}(c_{i,L}, \tau) - \dot{\gamma}_{i,L}(c_{i,L}, \tau_r) + \dot{\gamma}_{i,L}(c_{i,L}, \tau_r) - \mathcal{L}_s 
            \leq \mathcal{L}_{dL}\|\tau - \tau_r\| + \eta^* 
            \leq \mathcal{L}\epsilon + \eta^* \leq 0, $
            Similarly, $\dot{\gamma}_{i,U}(c_{i,U}, \tau) - \mathcal{L}_s \leq 0,$
        \item[(c)] Define the trajectories, $x, x_r: [0,t_f] \rightarrow \R^n$ as 
        $$x(\tau) = [\lambda_1 \gamma_{1,U}(c_{1,U},\tau) + (1 - \lambda_1)\gamma_{1,L}(c_{1,L},\tau), \ldots, 
        \lambda_n \gamma_{n,U}(c_{n,U},\tau) + (1 - \lambda_n)\gamma_{n,L}(c_{n,L},\tau)]^\top$$
        $$x_r(\tau_r) = [\lambda_{1,r} \gamma_{1,U}(c_{1,U},\tau_r) + (1 - \lambda_{1,r})\gamma_{1,L}(c_{1,L},\tau_r), \ldots, 
        \lambda_{n,r} \gamma_{n,U}(c_{n,U},\tau_r) + (1 - \lambda_{n,r})\gamma_{n,L}(c_{n,L},\tau_r)]^\top.$$ 
        Let $\tilde{x}_r: [0,t_f] \rightarrow \R^n$ be the continuous reconstruction of $x_r$, such that at any time $\tau$ it holds the value from the nearest sample within $\epsilon$. Then,
        \begin{align*}
         -&\rho^{\phi} (x ) = -\rho^{\phi} (x ) + \rho^{\phi} (\tilde x_r ) - \rho^{\phi} (\tilde x_r ) + \rho^{\phi} (x_r ) -\rho^{\phi} (x_r ) \\
         &= \| \mu(\lambda_1, \ldots, \lambda_n, \tau) - \mu(\lambda_{1,r}, \ldots, \lambda_{n,r}, \tau_r) \| \\ 
         &+ \max_{\tau \in [0,t_f], |\tau-\tau_r|\leq \epsilon} \mathcal{L}_\rho \|\tilde x_r(\tau) - x_r(\tau_r)\| + \eta^* \\
         &\leq \ \eta^* + \mathcal{L}_\mu \|[\lambda_{1}, \ldots, \lambda_{n}]^\top-[\lambda_{1,r}, \ldots, \lambda_{n,r}]^\top\| + \mathcal{L}_\rho(\mathcal{L}_L+\mathcal{L}_U)|\tau-\tau_r| \\
         &\leq \eta^* + \sqrt{\mathcal{L}_\mu^2+\mathcal{L}_\rho^2 (\mathcal{L}_L+\mathcal{L}_U)^2}\epsilon \leq \eta^* + \mathcal{L}\epsilon \leq 0.
    \end{align*}
    \end{itemize}

     Therefore, if the condition in \eqref{eq:satisfy} holds, the STTs defined by the boundary functions $\gamma_{i,L}(c_{i,L}, \tau)$ and $\gamma_{i,U}(c_{i,U}, \tau)$, for all $i \in [1;n]$, as obtained by solving the finitely many constraints in the SOP \eqref{eq:SOP}, satisfies the requirements of Definition \ref{def:stt}, thus completing the proof.
\end{proof}
\begin{remark}
    The Lipschitz constants $\mathcal{L}_L$, $\mathcal{L}_U$, $\mathcal{L}_{dL}$, and $\mathcal{L}_{dU}$ are required to check condition \eqref{eq:satisfy} and Algorithm 1 in the Appendix of \cite{DDSTT_arxiv} gives an estimate of these constants.
\end{remark}

\begin{remark}
    {STTs are shaped by basis functions $p_{i,\con}^k$, such as monomials in $t$ for polynomial-type STT. 
    The computational complexity of SOP depends primarily on the number of unknown coefficients and sampling density. Higher-degree polynomials and finer sampling improve STT flexibility but increase decision variables, leading to polynomial growth in computation time. The change in computational complexity with the number of decision variables and constraints is illustrated in Figure \ref{fig:compute}. However, since the STT construction is offline, the real-time control remains efficient as discussed next.}
\end{remark}
\begin{figure}[h!]
     \centering
     \begin{subfigure}[b]{0.36\textwidth}
         \centering
         \includegraphics[width=\textwidth]{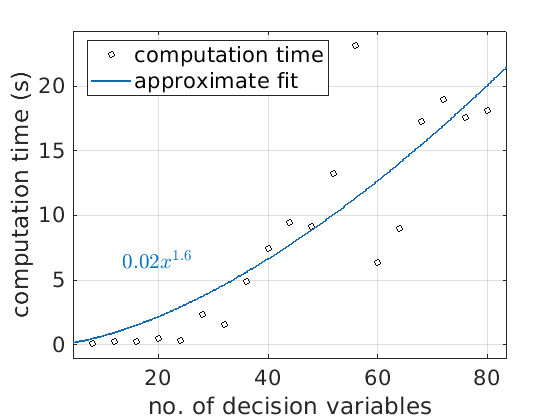}
         \caption{Computation Time vs Decision Variables}
     \end{subfigure}
     \hfill
     \begin{subfigure}[b]{0.36\textwidth}
         \centering
         \includegraphics[width=\textwidth]{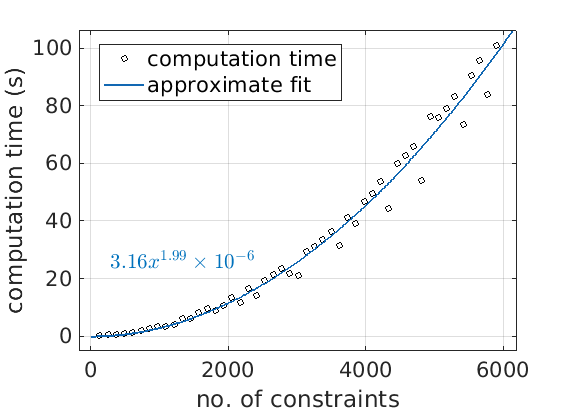}
         \caption{Computation Time vs Constraints}
     \end{subfigure}
        \caption{Time complexity}
        \label{fig:compute}
\end{figure}

\begin{figure}[t]
    \centering
    \includegraphics[width=0.75\textwidth]{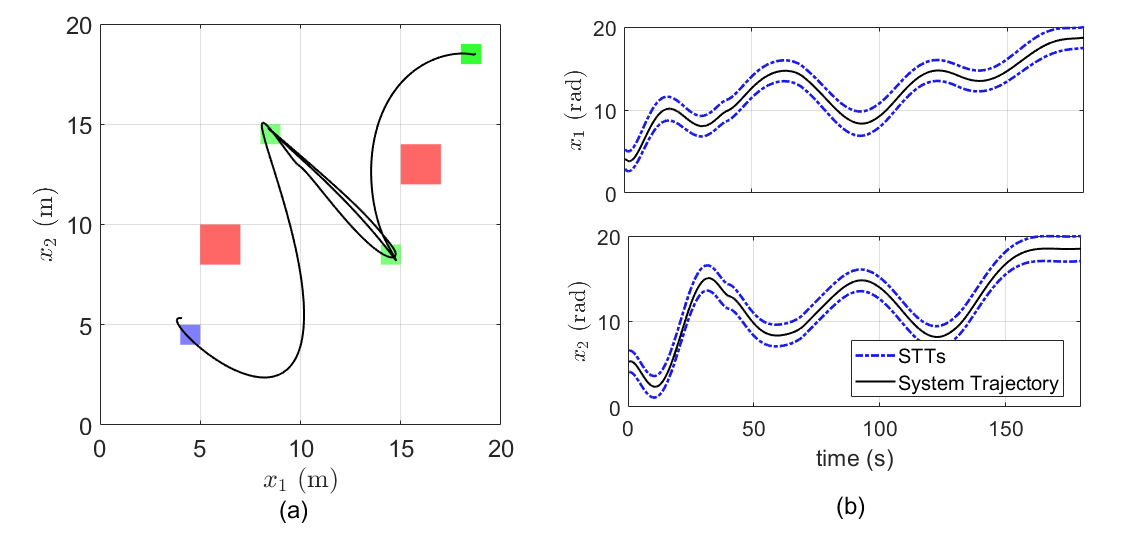}
    \caption{(a) Trajectory (b) Generated STTs for mobile robot.}
    \label{fig:omni}
\end{figure}

\section{Controller Design and Theoretical Analysis}\label{sec:con}
In this section, we derive the control law to satisfy \eqref{eqn:stt_constrain}, thereby satisfying the STL specification. {The controller structure builds upon standard PPC results in \cite{PPC1} and is a direct application of the tube-based control framework in \cite{STT}. We adapt these principles to our setting to solve STL specifications using STT.}
Define the normalized error $e(x,t) := [e_1(x_1,t), \ldots, e_n(x_n,t)]^{\top} = \gamma_m^{-1} (t) \left( 2x - \gamma_s (t) \right)$, the transformed error $\varepsilon(x,t) = \left[\ln\left(\frac{1+e_1(x_1,t)}{1-e_1(x_1,t)}\right), \ldots, \ln\left(\frac{1+e_n(x_n,t)}{1-e_n(x_n,t)}\right) \right]^{\top}$, and the diagonal matrix $\xi(x,t) = 4 \gamma_m^{-1} \left(I_{n}-\text{diag}(e(x,t) \circ e(x,t))\right)^{-1}$,
where, $\gamma_s := [\gamma_{1,U} + \gamma_{1,L}, \ldots, \gamma_{n,U} + \gamma_{n,L}]^{\top}$ and $\gamma_m := \textsf{diag}(\gamma_{1,m},\ldots,\gamma_{n,m})$, with $\gamma_{i,m} = \gamma_{i,U} - \gamma_{i,L}$. 

\begin{theorem} \label{theorem_ras}
    Consider the nonlinear control-affine system $\mathcal{S}$ in \eqref{eqn:sysdyn}. If the initial state $x(0)$ is within the STTs, i.e., $\gamma_{i,L}(0) < x_i(0) < \gamma_{i,U}(0), \forall i \in [1;n]$, then the closed-form control strategy,
    \begin{gather} \label{eqn:Control_stt}
        u(x,t) = -k\xi(x,t) \varepsilon(x,t), k \in \R^+,
    \end{gather}
    will satisfy \eqref{eqn:stt_constrain}, thereby satisfying the STL specification.
\end{theorem}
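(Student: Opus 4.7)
The plan is to carry out a prescribed--performance style Lyapunov analysis on the transformed error $\varepsilon$, exploiting the fact that each component $\varepsilon_i$ is a strictly increasing bijection $(-1,1)\to\R$ that diverges as $e_i\to\pm 1$. By the definitions of $e$, $\gamma_s$ and $\gamma_m$, the tube constraint \eqref{eqn:stt_constrain} is equivalent to $e_i(x_i(t),t)\in(-1,1)$ for every $i$, which in turn is equivalent to $\|\varepsilon(x(t),t)\|<\infty$. Since $x(0)$ lies strictly inside the tubes, $\varepsilon(x(0),0)$ is finite, and Assumption~\ref{assum:lip} together with continuity of the feedback law on the open set $\{e\in(-1,1)^n\}$ guarantees a unique maximal closed--loop solution on some interval $[0,\tau_{\max})$ with $\tau_{\max}\leq t_f$; the goal is to prove $\tau_{\max}=t_f$ and that $e$ never reaches the boundary.

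First I would compute $\dot\varepsilon$. Differentiating $\varepsilon_i=\ln((1+e_i)/(1-e_i))$ gives $\dot\varepsilon_i=\frac{2}{1-e_i^2}\dot e_i$, and from $e_i=(2x_i-\gamma_{i,s})/\gamma_{i,m}$ one obtains $\dot e_i=\gamma_{i,m}^{-1}(2\dot x_i-\dot\gamma_{i,s}-e_i\dot\gamma_{i,m})$. Using $\xi_{ii}=4/(\gamma_{i,m}(1-e_i^2))$, these combine into the compact identity $\dot\varepsilon=\xi\bigl(\dot x-\tfrac12\dot\gamma_s-\tfrac12 D\,e\bigr)$, where $D:=\text{diag}(\dot\gamma_{1,m},\ldots,\dot\gamma_{n,m})$. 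Substituting the dynamics \eqref{eqn:sysdyn} and the feedback \eqref{eqn:Control_stt} yields the closed--loop error dynamics $\dot\varepsilon=\xi\,v(x,t)-k\,\xi\,g(x)\,\xi\,\varepsilon$, with $v(x,t):=f(x)+w-\tfrac12\dot\gamma_s-\tfrac12 D\,e$.

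Next I would take the Lyapunov candidate $V=\tfrac12\varepsilon^\top\varepsilon$ and bound $\dot V$ on $[0,\tau_{\max})$. Because $\xi$ is diagonal, the control term satisfies $\varepsilon^\top\xi g(x)\xi\varepsilon=(\xi\varepsilon)^\top g_s(x)(\xi\varepsilon)\geq\underline g\|\xi\varepsilon\|^2$ by Assumption~\ref{assum:pd}. For the residual term, whenever $e\in[-1,1]^n$ the state $x$ lies in the closed tube, a compact subset of $\X$; by Assumption~\ref{assum:lip} the maps $f,g$ are bounded there, $w$ is bounded by hypothesis, and $\dot\gamma_{i,L},\dot\gamma_{i,U}$ are bounded on $[0,t_f]$ via Assumption~\ref{assum:funlip}. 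Hence there exists $B>0$ with $\|v(x,t)\|\leq B$, and Cauchy--Schwarz gives $\dot V\leq -k\underline g\|\xi\varepsilon\|^2+B\|\xi\varepsilon\|$, so $\|\xi\varepsilon\|$ stays uniformly bounded on $[0,\tau_{\max})$ by $\max\{\|\xi(x(0),0)\varepsilon(x(0),0)\|,\,B/(k\underline g)\}$.

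Finally I would close the argument by a standard maximal--interval contradiction. The constraint \eqref{subeqn:ROP1} forces $\gamma_{i,m}(t)\geq\gamma_{i,d}>0$, while Assumption~\ref{assum:funlip} together with continuity on the compact $[0,t_f]$ makes $\gamma_{i,m}$ also bounded above; consequently $\xi_{ii}=4/(\gamma_{i,m}(1-e_i^2))\geq 4/\max_{t}\gamma_{i,m}(t)>0$, converting the bound on $\|\xi\varepsilon\|$ into a uniform bound $\|\varepsilon\|\leq\bar\varepsilon<\infty$. If one had $\tau_{\max}<t_f$, then as $t\to\tau_{\max}^-$ some $e_i(t)$ would necessarily approach $\pm 1$, forcing $|\varepsilon_i(t)|\to\infty$ and contradicting $\|\varepsilon\|\leq\bar\varepsilon$. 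Hence $\tau_{\max}=t_f$, the inclusion $x(t)\in\prod_i(\gamma_{i,L}(t),\gamma_{i,U}(t))$ holds for all $t\in[0,t_f]$, and Definition~\ref{def:stt} gives $\rho^\phi(x)>0$, i.e., $x\models\phi$. The main obstacle I anticipate is the coordinated bookkeeping that simultaneously uses (i) the lower bound on $\xi$ from the tube--separation constraint, (ii) the Lyapunov bound on $\|\xi\varepsilon\|$, and (iii) the blow--up of $\varepsilon_i$ at $e_i=\pm 1$, since all three are needed to rule out a finite escape of $e$ from $(-1,1)^n$.
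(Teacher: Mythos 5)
Your proposal is correct and takes essentially the same route as the paper, whose proof simply defers the tube-invariance step to Theorem 1 of the cited STT work (the standard prescribed-performance Lyapunov/maximal-interval argument you spell out) and then invokes Definition \ref{def:stt} to conclude $x \models \phi$. One minor reordering: the inequality $\dot V \leq -k\underline{g}\|\xi\varepsilon\|^2 + B\|\xi\varepsilon\|$ with $V=\tfrac12\|\varepsilon\|^2$ does not by itself bound $\sup\|\xi\varepsilon\|$ by $\max\{\|\xi(x(0),0)\varepsilon(x(0),0)\|, B/(k\underline{g})\}$ as stated in your third paragraph; you should first use $\xi_{ii}\geq 4/\max_{t}\gamma_{i,m}(t)=:c_i>0$ (the ingredient you already invoke in the last paragraph) to get $\|\xi\varepsilon\|\geq c\|\varepsilon\|$, which makes $\dot V<0$ whenever $\|\varepsilon\|>B/(k\underline{g}c)$ and hence bounds $\|\varepsilon\|$ directly, after which the blow-up of $\varepsilon_i$ at $e_i=\pm1$ rules out finite escape exactly as you argue.
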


\begin{proof}
    {Following the proof of Theorem 1 in \cite{STT}}, the control law in Equation \eqref{eqn:Control_stt} ensures that the system trajectory remains within the STTs over $[0, t_f]$. Consequently, from Equations \eqref{eqn:stt_constrain} and \eqref{eqn:stt_stl}, the robustness metric for the STL specification $\phi$ remains positive, guaranteeing $\phi$: $x \models \phi$.
\end{proof}

\begin{remark}
The closed-form time-dependent control law in \eqref{eqn:Control_stt} is approximation-free and guarantees the satisfaction of STL specifications for control-affine systems with unknown dynamics. {The gain $k$ is a user-defined positive design parameter that regulates how aggressively the control law ensures the system state remains within the STTs. Additionally, if $g_{s}(x)$ is negative definite, $k$ (in control law \eqref{eqn:Control_stt}) $\in \R \setminus \R_0^+$.}
\end{remark}

\begin{figure*}
    \centering
    \includegraphics[width=1\textwidth]{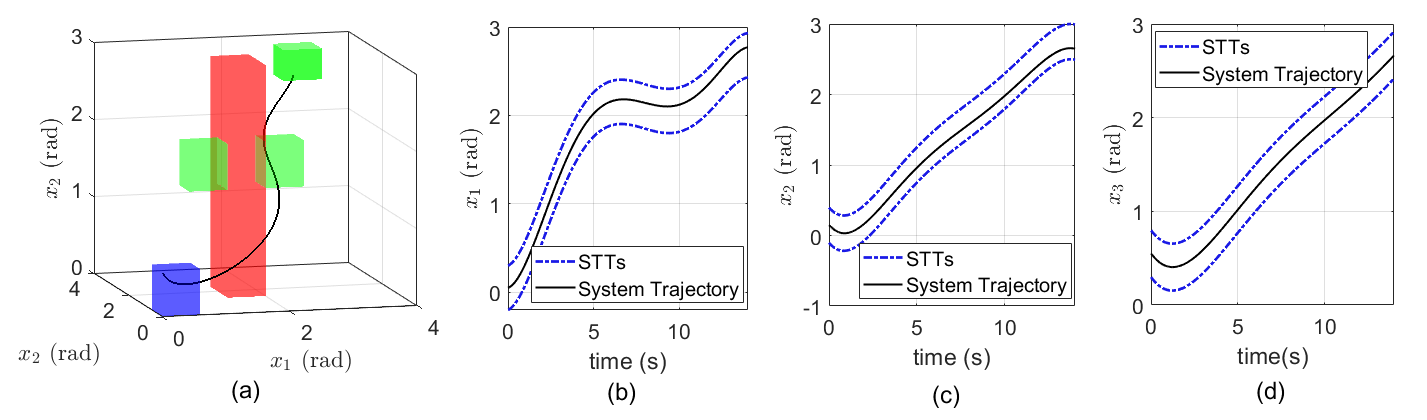}
    \caption{(a) System trajectory and (b), (c), (d) are generated STTs for Spacecraft Case Study.}
    \label{fig:uav}
\end{figure*}

\section{Case Studies}
To demonstrate the efficacy of our STT-based control framework for STL tasks, we used two case studies: (A) an omnidirectional robot and (B) a rotating rigid spacecraft. {For a given candidate value of $\eta$, the feasibility of the SOP can be checked using an SMT solver such as Z3 \cite{z3}. To determine a near-minimum feasible value of $\eta^*$, we use a bisection method over a user-defined interval until convergence within a desired tolerance. The case studies are solved on a Linux Ubuntu operating system with an Intel i7-7700 CPU and 32GB RAM. The implementation code is available at: \href{https://github.com/FocasLab/STL_STT.git}{https://github.com/FocasLab/STL\_STT.git}.}

In case studies, each region $M$ is a hyperrectangle with center $\mathrm{center}(M)$ and size $d_M$, and is represented by the predicate $\tilde{M} := |x - \mathrm{center}(M)|_\infty < d_M$.

\subsection{Omnidirectional Robot}
In this example, we consider the omnidirectional robot model from \cite{Funnel_ICC}
with the following STL task:
\begin{align*}
    \psi = &\G_{[0,130]} ( (\tilde{S} \implies \F_{[30,40]} \tilde{T}_1) \wedge (\tilde{T}_1 \implies \F_{[30,40]} \tilde{T}_2) 
    \wedge (\tilde{T}_2 \implies \F_{[30,40]} \tilde{T}_1)) \\
    &\wedge \F_{[170,180]} \G_{[0,30]} \tilde{G}
    \wedge \G_{[0,200]} \neg (\tilde{O}_1 \vee \tilde{O}_2),    
\end{align*}
\begin{align*}
    \psi = &\G_{[0,13]} ( (\tilde{S} \implies \F_{[3,4]} \tilde{T}_1) \wedge (\tilde{T}_1 \implies \F_{[3,4]} \tilde{T}_2) 
    \wedge (\tilde{T}_2 \implies \F_{[3,4]} \tilde{T}_1)) \wedge \F_{[17,18]} \G_{[0,3]} \tilde{G}
    \wedge \G_{[0,20]} \neg (\tilde{O}_1 \vee \tilde{O}_2),    
\end{align*}
with $S = [4,5]\times[4,5]$, $T_1 = [8,9]\times[14,15]$, $T_2 = [14,15]\times[8,9]$, $G = [18,19]\times[18,19]$, $O_1 = [5,7]\times[8,10]$, and $O_2 = [15,17]\times[12,14]$.
This STL task requires the robot to switch between $T_1$ and $T_2$ every 3–4 time units until time 13, reach and stay in region $G$ from time 17 to 20, and avoid $O_1$ and $O_2$ for the first 20 time units, starting from $S$. {Piecewise polynomial tubes of degree 5 were generated in about 1 min, with control synthesis requiring 0.02 seconds. The Lipschitz constant was $\mathcal{L} = 0.88$, with a grid size of $\epsilon = 0.1$. The computed $\eta^* = -0.12$ satisfies \eqref{eq:satisfy} $\eta^* + \mathcal{L}\epsilon = -0.032 \leq 0$.} The STTs and the robot trajectory are presented in Figure \ref{fig:omni}.

\subsection{Rotating Rigid Spacecraft}
In this case study, we validate the proposed method on a rigid spacecraft, adopted from \cite{Khalil}, with the following specification for its dynamics:
$$\psi = \tilde{S} \implies \F_{[7, 8]} (\tilde{T}_1 \vee \tilde{T}_2) \wedge \F_{[14,15]} \tilde{G} \wedge \G_{[0,15]} \neg \tilde{O},$$
with $S = [0,0.6] \times [0,0.6] \times [0.4,1]$, $T_1 = [2,2.6] \times [1.4,2] \times [1.4,2]$, $T_2 = [0.8,1.4] \times [1.4,2] \times [1.4,2]$, $G = [2.6,3.2] \times [2.6,3.2] \times [2.6,3.2]$, and $O = [1.4,2] \times [1.4,2.4] \times [0,3]$.
This defines a sequential STL task, in which the spacecraft, starting from $S$, must eventually reach either region $T_1$ or $T_2$ between 7 and 8 seconds. Then, it must "eventually" proceed to the goal region $G$ between 14 and 15 seconds. Additionally, the spacecraft must avoid the unsafe set $O$ at all times from 0 to 15 seconds. {Polynomial tubes of degree 5 were generated in 0.91 seconds with controller synthesis requiring about 0.03 seconds. The Lipschitz constant was $\mathcal{L} = 0.6$, with a grid size of $\epsilon = 0.5$. The computed $\eta^* = -0.31$ satisfies \eqref{eq:satisfy} $\eta^* + \mathcal{L}\epsilon = -0.01 \leq 0$.} The STTs and the system trajectory are shown in Figure \ref{fig:uav}.

\begin{table}[h!]
\centering
\caption{Control synthesis time (seconds)}\label{table:computation_time}
\begin{tabular}{|l|l|l|l|l|l|}
\hline
Case    & \textbf{STT} & MILP & MPC  & CBF & Funnel \\ \hline
stlcg-1 & \textbf{0.019}   & 271.773  & 372.313  & N/A     & N/A        \\ \hline
stlcg-2 & \textbf{0.022}   & 23.150   & 29.862   & 18.418  & 0.451      \\ \hline
space-1 & \textbf{0.038}   & 7938.420 & 3179.483 & N/A     & N/A        \\ \hline
space-2 & \textbf{0.033}   & 200.510  & 290.192  & 80.308  & 0.453      \\ \hline
\end{tabular}
\end{table}

\subsection{Comparison}

The comparison is carried out on two case studies: \textit{stlcg-1}, a two-dimensional task presented in \cite{STL_grad} and benchmarked in \cite{STL_multi_chuchu}, and \textit{space-1}, the three-dimensional task in Case Study B. Both involve complex STL specifications, while \textit{stlcg-2} and \textit{space-2} are their simplified counterparts with only STL fragments \cite{CBF_STL_Dimos, STL_PPC}. From Table \ref{table:computation_time}, STT achieves significantly lower control synthesis times than MILP, MPC, and CBF, all of which are computed at an interval of 0.01s, scale poorly as the state dimension increases and the STL specifications become more complex. In contrast, STT maintains minimal and nearly constant computational complexity, demonstrating its scalability. Although funnel-based methods perform competitively, both CBF and funnel fail to handle tasks involving negation or disjunction (\textit{stlcg-1} and \textit{space-1}). 

\section{Conclusion and Future Work}\label{sec:conclusion}
This work presented an STT-based control framework for satisfying STL specifications in control-affine systems with unknown dynamics. By formulating STL constraints as a robust optimization problem and solving it through scenario optimization, we constructed STTs with formal correctness guarantees. The approximation-free closed-form control law constrains the system trajectory within the STTs, ensuring STL satisfaction without requiring real-time optimization, significantly improving computational efficiency and scalability compared to existing methods. Compared with existing methods, STTs require less time to synthesize controllers and generate efficient trajectories for complex STL specifications.

{While the proposed approach effectively handles general STL tasks, the control effort remains unbounded. Future work will focus on incorporating prescribed input bounds. We also plan to improve the computational efficiency of the STT construction process and extend the framework to underactuated systems.}

\bibliographystyle{unsrt} 
\bibliography{sources} 

\subsection*{Appendix A. Proof of Theorem \ref{thm:lip_rob}}
\begin{proof}[of Theorem \ref{thm:lip_rob}]
We show that $\mu(\theta, \lambda)
    := -\rho^{\phi} \big( x \big)$ with $x(\tau) = [\lambda_1 \gamma_{1,U}(c_{1,U},\tau) + (1 - \lambda_1)\gamma_{1,L}(c_{1,L},\tau), \ldots,
    \lambda_n \gamma_{n,U}(c_{n,U},\tau) + (1 - \lambda_n)\gamma_{n,L}(c_{n,L},\tau)]^\top$ for all $\tau \in [0,t_f]$, is Lipschitz continuous with respect to $\lambda$.    
    
    From \cite{STL_Lip}, we know that there exists $\mathcal{L}_{\rho}\in \R^+$, such that
    $$| \rho^{\phi}(x_1, {.\kern 0em.\kern 0em.}, x_{k}, {.\kern 0em.\kern 0em.}, x_n) - \rho^{\phi}(x_1, {.\kern 0em.\kern 0em.}, \hat{x}_{k}, {.\kern 0em.\kern 0em.}, x_n) | \leq \mathcal{L}_{\rho} | {x}_{k} - \hat{x}_{k} |.$$

    Further, note that $\overline{\gamma}_{L}:=\max_{t\in[0,t_f]}\max_{i\in[1;n]}\gamma_{L}(c_{i,L},t)$ and $\overline{\gamma}_{U}:=\max_{t\in[0,t_f]}\max_{i\in[1;n]}\gamma_{U}(c_{i,U},t)$ are finite because $\gamma_{i,L}(\cdot)$ and $\gamma_{i,U}(\cdot)$ are Lipschitz on the compact interval $[0,t_f]$. 
     
    \begin{itemize}
        \item[Step 1:] 
        For some $\lambda_i \in [0,1]$, $i\in [1;n]$, fix rest $\lambda_j \in [0,1]$, for all $j\in [1;n]\setminus\{i\}$: 
        \begin{align*}
            & \| \mu(\lambda_1, \ldots, \lambda_i, \ldots, \lambda_n) - \mu(\lambda_1, \ldots, \hat\lambda_i, \ldots, \lambda_n) \| 
            \leq \mathcal{L}_\rho (\overline{\gamma}_{L}+\overline{\gamma}_{U}) |\lambda_i - \hat\lambda_i|
        \end{align*}
        
        \item[Step 2] 
        \begin{align*}
            &\|\mu(\lambda_{1}, \ldots, \lambda_{n}) - \mu(\hat{\lambda}_{1}, \ldots, \hat{\lambda}_{n})\| \\
            = & \|\mu(\lambda_{1}, \lambda_{2}, \ldots, \lambda_{n}) 
            - \mu(\hat\lambda_{1}, \lambda_{2}, \ldots, \lambda_{n})
            + \mu(\hat\lambda_{1}, \lambda_{2}, \ldots, \lambda_{n})
            - \mu(\hat\lambda_{1}, \hat\lambda_{2}, \ldots, \lambda_{n})\\
            &\qquad + \mu(\hat\lambda_{1}, \hat\lambda_{2}, \ldots, \lambda_{n})
            \ldots
            - \mu(\hat{\lambda}_{1}, \hat\lambda_2 \ldots, \hat{\lambda}_{n})\| \\
            \leq & \mathcal{L}_\rho (\overline{\gamma}_{L}+\overline{\gamma}_{U}) |\lambda_1 - \hat\lambda_1| + \mathcal{L}_\rho (\overline{\gamma}_{L}+\overline{\gamma}_{U}) |\lambda_2 - \hat\lambda_2| + \ldots + \mathcal{L}_\rho (\overline{\gamma}_{L}+\overline{\gamma}_{U}) |\lambda_n - \hat\lambda_n| \\
            \leq & \mathcal{L}_\rho (\overline{\gamma}_{L}+\overline{\gamma}_{U}) \sum_{i=1}^{n} |\lambda_i - \hat\lambda_i|   
        \end{align*}
    \end{itemize}
    From Cauchy-Schwarz inequality,
    \begin{align}
        \leq \mathcal{L}_\mu \| [\lambda_{1}, \ldots, \lambda_{n}]^\top - [\hat{\lambda}_{1}, \ldots, \hat{\lambda}_{n}]^\top \|.  
    \end{align}
    where 
    $\mathcal{L}_\mu = \mathcal{L}_\rho (\overline{\gamma}_{L}+\overline{\gamma}_{U}) \sqrt{n}$.
\end{proof}

\end{document}